\def\BibTeX{{\rm B\kern-.05em{\sc i\kern-.025em b}\kern-.08em
    T\kern-.1667em\lower.7ex\hbox{E}\kern-.125emX}}
\newtheorem{theorem}{Theorem}
\newtheorem{lemma}{Lemma}
\newtheorem{definition}{Definition}
\newtheorem{formula}{Formula}
\begin{document}

\title{Interference in Wireless Networks - A Power Allocation Approach}

\author{\IEEEauthorblockN{1\textsuperscript{st} Tzalik Maimon}
\IEEEauthorblockA{\textit{Ceragon Networks}\\
tzalikm@ceragon.com}
\and
\IEEEauthorblockN{2\textsuperscript{nd} Shirley Alus}
\IEEEauthorblockA{\textit{Ceragon Networks}\\
shirleya@ceragon.com}
\and
\IEEEauthorblockN{3\textsuperscript{rd} Gil Kedar}
\IEEEauthorblockA{\textit{Ceragon Networks}\\
gilk@ceragon.com}
}

\maketitle

\begin{abstract}
{\em Co-Channel Interference (CCI)} is a fundamental problem in wireless communication networks. It is a well-studied problem in the field. As channels use the same frequency, interference in the radio waves occurs which, in turn, reduces the capacity of the interfered channels. There is a need to use the least number of frequencies as communication networks advance to 5G. In this paper, we present a novel technique to manage interference on channels. We use time division for links of the same frequency and, as a result, we show a significant reduction in the number of frequencies used overall in the network. 
\end{abstract}

\section{Introduction}

The next generation of mobile networks sets several challenges, such as, among others, higher sensitivity to interference. As networks become more dense and the band of frequencies used is limited in size, {\em Co-Channel Interference (CCI)} becomes a core issue. Links that use the same frequency may interfere with each other's transmitting signals. This inevitably causes a loss of capacity in the network. Thus, lowering interference in the network or managing it is a well-studied problem in the field.

The approach we take in this paper is that of {\em Power Allocation (PA)}. We change the power level transmitted from each link in constant intervals such that high priority links are less interfered in certain time slots. The power transmitted by interfering links is calculated to be just under an allowed level of interference using Free Path Space Loss and Radiation Patterns of radio antennas. We formulate the problem using graph theory and show that using our method, we can reduce the number of frequencies required to achieve the same capacity.
Previous works tried to allocate the power using machine learning techniques \cite{DEFB21, HZ20, SLGG22, SZMC2021, WLGK18}.  The advantage of our technique is that machine learning requires training while our deterministic approach does not. Another advantage is that changing power levels over time, instead of using a fixed level, gives our algorithm a greater range of possible interference values, each depending on requirements or a goal to achieve (average, minimum, maximum, etc). 
Another known method is using Nash equilibrium \cite{WL22, ZSBJD16, ZSBKJ22, TLCH19}. These algorithms also balance constant output power levels which still have very high interference between links. Also, sometimes equilibrium cannot be reached which cannot be the case in our method. Since our algorithm aims to minimize interference to a certain level, we outperform Nash equilibrium algorithms.

\section{Our Method}

For measuring the capacity in a network, we use the formula of Shannon and Hartley \cite{S98} given as $C = B \cdot \log_2(1+ \frac{P_r}{N})$ where $C$ is the capacity, $B$ is the bandwidth, $P_r$ is the signal power received and $N$ is the noise power. As part of our method, we use the relation between the incoming power in a receiver and the outgoing power in an antenna. This is given by Friis \cite{S13} as $\frac{P_r}{P_t} = D_r D_t (\frac{c}{4 \pi d f})^2$ where $P_r$ is power received, $P_t$ is power transmitted, $D_r$ is the directivity of the receiver, $D_t$ is the directivity of the transmitter, $c$ is the speed of light, $d$ is the distance between the transmitter and the receiver, and $f$ is the frequency of the signal. Therefore, we have the following formula.

\begin{formula}  \label{thm:CapacityMain}
{\bf : The Power-Capacity Function (PC)} $C = B \cdot \log_2(1+ \frac{P_t D_r D_t c^2}{N(4 \pi d f)^2})$.
\end{formula}

\noindent We define the angle at which the signal enters an antenna as the Angle of Reception (AoR). We define the Radiation Pattern Function (RPF) as the function that per angle and distance describes the gain of the signal received from the source. Let $\theta$ be the AoR. Then, the power received is $P'_r = P_r \cdot e^{-\frac{4\theta^2}{\sqrt{2}w^2}}$. This will be important to amplify interferring links.

The algorithm we perform syncs links using their internal clocks. We can thus assume that we are allowed to refer to a "global clock" for all links in the network. The synchronized queue is expressed in time slots (or ticks) each of the same length. At time slot (tick) $j$ all edges which were assigned the label $j$ are considered the highest priority. For this purpose, we consider the communication network as the input multigraph of a coloring algorithm. We define a graph where each vertex $v \in V$ is a node in the network. The edges of $G$ are defined by two types. Black edges, denoted in a subset $E_b$, represent a link between two antennas. A black edge is directed in the direction of the link it represents. A red edge $e$ is directed from a vertex $v$ to a vertex $u$ if a link $e_1 = (v,u)$ interferes with a link $e_2 = (w,u)$. We refer to $e_1$ as the {\em base} of $e$. We define $E(G) = E_b \cup E_r$. See Figure \ref{fig:InputGraph} for an example of an input graph.

\begin{figure}[ht]
\caption{Example of Input Graph}
\centering
\includegraphics[width=0.5\textwidth]{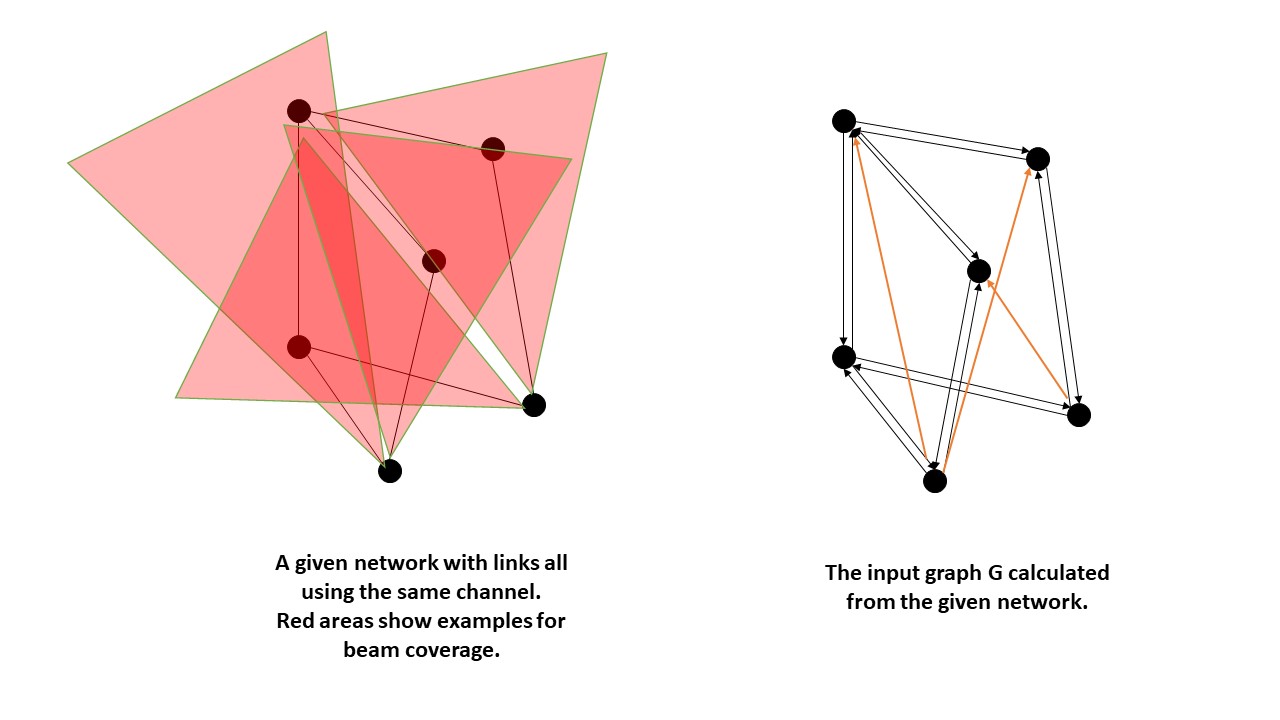}
\label{fig:InputGraph}
\end{figure}

\subsection{Building a Queue}  \label{sec:colorAlgo}

In this section, we color the input graph $G$. We say that a red edge $e'$ is {\bf an effect} of $e$ if $e$ is the link for which the interference $e'$ occurs. We denote all the effects of $e$ as $r(e)$. In the same manner, we define $e$ as the {\bf base} of $e'$. We denote it $b(e')$. Once we have colored a black edge $e$ in some color $c$, then all effects $r(e)$ are assigned the same color. If we color a red edge $e$ with a color $c'$, then its base $b(e')$ is colored using the same color, and likewise for the effect set $r(b(e'))$. To achieve this, we build a new graph $H$ on which we represent the dependencies above. We then reduce the coloring of $G$ to the coloring of $H$. The coloring of $H$ is described at Algorithm \ref{alg:EdgeColoring}. \\

\begin{algorithm}
\caption{DependentEdgeColoring($G$)}
\label{alg:EdgeColoring}
\begin{algorithmic}
\STATE initiate $H$ as an empty graph.
\FOR{$e \in E_b(G)$}
    \STATE Add a vertex $v(e)$ representing $e$ to $H$.
\ENDFOR
\FOR{$e' \in E_r(G)$}
    \STATE $s \gets$ the base of $e'$.
    \STATE $t \gets$ the link interfered by $e'$
    \STATE Add an edge between $v(s)$ and $v(t)$ to $H$.
\ENDFOR
\STATE Vertex-Color $H$.
\FOR{$v \in V(H)$}
    \STATE Let $e \in E_b(G)$ represented by $v$. Then the color of $e$ in $G$ equals the color of $v$ in $H$.
\ENDFOR
\end{algorithmic}
\end{algorithm}

The following lemma shows that the coloring of $M$ is a legal queue for $G$. 

\begin{lemma}   \label{lem:legalQueue}
The legal coloring of $M$ colors the edges in $G$ such that: \begin{itemize}
    \item For each $h_i \in H$ no two edges in $h_i$ are of the same color.
    \item For each $c_i \in C$ all edges in $c_i$ have the same color.
\end{itemize}
\end{lemma}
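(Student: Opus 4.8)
The plan is to unwind the reduction performed by Algorithm~\ref{alg:EdgeColoring} and to verify the two bullets separately. First I would fix a proper vertex-coloring of $M$ and trace exactly how it induces a coloring of $E(G)$: every black edge receives the color of the vertex (equivalently, the contracted class) that represents it in $M$, and then every red edge $e'$ inherits the color of its base $b(e')$. Before anything else one must check that this induced coloring is well defined, i.e. that no edge of $G$ is handed two different colors. For a black edge this is immediate, since it is represented exactly once in $M$. For a red edge it follows from the fact that each red edge has a \emph{unique} base, so the ``effect'' relation partitions $E_r(G)$ and each red edge inherits a single, unambiguous color; the rule ``color a red edge $\Rightarrow$ color its base $b(e')$ and all of $r(b(e'))$'' is therefore consistent with the push-down.

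Given that, the second bullet is essentially true by construction. By the coloring rules stated in Section~\ref{sec:colorAlgo}, each set $c_i \in C$ is precisely some black edge $e$ together with its effect set $r(e)$, and the push-down step assigns every member of $r(e)$ the color of $e$; hence all edges of $c_i$ carry one color. For the first bullet I would show that for any two edges $x, y$ lying in a common $h_i \in H$, the representatives of $x$ and $y$ in $M$ — mapping a red edge first to its base — are adjacent in $M$. This is exactly what the second loop of Algorithm~\ref{alg:EdgeColoring} produces: for each interference it inserts an edge of $M$ between the representative of the interfering link and the representative of the interfered link. Since the coloring of $M$ is proper, those two representatives receive distinct colors, and the push-down copies class colors verbatim, so $x$ and $y$ end up differently colored; summing over all $h_i$ gives the claim.

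The step I expect to be the main obstacle is the bookkeeping that ties the constraint families $H$ and $C$ to the single auxiliary graph $M$: one must be sure that the identification of edges performed when building $M$ (folding a black edge together with its effects into one vertex/class) never collapses two edges that some $h_i$ wants kept apart, i.e. that no $h_i$ and $c_j$ overlap in more than one edge. I would establish this by showing $|h_i \cap c_j| \le 1$ for all $i,j$, so that every ``different color'' constraint descends to a genuine edge of $M$ joining two distinct vertices, and conversely that lifting a proper $M$-coloring can never create a monochromatic pair inside any $h_i$. Once this compatibility between the two families and the quotient taken to form $M$ is in place, both bullets follow from the two facts already isolated — ``the $M$-coloring is proper'' and ``the push-down is constant on $C$-classes'' — and the lemma is proved.
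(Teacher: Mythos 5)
Your argument matches the paper's: both bullets are reduced to the properness of the vertex-coloring of the auxiliary graph $M$ via the representative map that sends each red edge to its base, with the second bullet holding by construction of the classes $c_i$ as a black edge together with its effect set. The only difference is that you make explicit the well-definedness of the push-down and the compatibility check $|h_i \cap c_j| \le 1$, both of which the paper's proof leaves implicit.
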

\begin{proof}
Denote $e_1, e_2$ two edges representing interfering links upon a vertex $v$. Then there is a permutation $h_i \in H$ such that $e_1, e_2 \in h_i$. Then, upon constructing $M$, there is an edge $(e_1, e_2) \in E(M)$. W.L.O.G. assumes $e_1$ is a red edge. Then there is an edge $(b(e_1),e_2)$ in $E(M)$. When we use the coloring of $M$ on $G$, we have $L(e_1) = L(b(e_1)) \neq L(e_2)$. Thus, $e_1$ and $e_2$ have distinct colors in $G$. This can be shown also in the same way if both $e_1, e_2$ are red edges or both are black edges. \\
As for the second requirement, it is clear that every $c_i \in C$ is composed of one black edge that is an actual link and all other edges are the effects of that link. Therefore, the base and all of its effects are colored using the same color from the replacement of red edges with black edges in $H$ while constructing $M$.
\end{proof}

\section{Dynamic Signal Power}  \label{sec:DynamicPower}

We calculate the signal power for each link in each slot. In each slot $j$, each edge $e \in E_b$ is allowed to transmit at signal strength at most $P_j(e)$. During this calculation, we create a queue between incoming edges of each vertex $v \in G$ depending on the coloring $L$ we computed in section \ref{sec:colorAlgo}. Let $e = (v,u)$ be a black edge. Let $S$ be the set of edges with which $e$ interferes. Then the power upon $e$ in time slot $j$ is defined as

\begin{equation}  \label{eq:power}
    P(e)_j= 
        \begin{cases}
            P_{max}        & \text{if } L(e_i) = j(Q) \\
            P_\ell(j)     & \text{otherwise}
        \end{cases}
\end{equation}

\noindent where $P_{max}$ is the full power and $P_\ell(j)$ is a reduced power. See Figure \ref{fig:PowersQueue} for an example of powers assigned depending on the queue.

\subsection{Calculating the Reduced Power} 

Let $u$ be a vertex with a receiver on which we have interference from an edge $e = (v,w)$. If there is no such $u$, we define $P_\ell(j) = P_{max}$. Denote $d$ the distance from $v$ to $u$. For the purpose of considering AoR, we denote the RPF as $z$. Denote $\theta$ the AoR of $e$ upon $u$. If $e$ interferes with several links, we denote $u$ the terminal such that $d^2 z^{-1}(\theta)$ is the smallest. We will aim for the maximum of $P_{min}$ power to reach $u$. Therefore,

\begin{equation} \label{eq:loweredPower}
\begin{split}
    P_{min} &= \frac{P_\ell(j) \cdot z(\theta)}{4\pi d^2} \\ 
    P_\ell(j) &= (4 \pi d^2)z^{-1}(\theta) \cdot P_{min}
\end{split}
\end{equation}

But this is true if $e$ is the only edge for which $u$ is an exposed terminal in time slot $j$ for some link $l$. We call $e$ a {\bf interfering edge} on $l$. We define a set $B^j_l$ of all interfering edges on $l$ in slot $j$. W.L.O.G. assumes that the capacity is shared equally among all edges $\{b(e)\ |\ e \in B^j_l\}$. Thus, we have

\begin{equation} \label{eq:reducedPower}
    P_\ell(j) = \frac{(4 \pi d^2)z^{-1}(\theta) \cdot P_{min}}{|B^j_l|}
\end{equation}

\begin{figure}[!htb]
    \centering
    \caption{Example of Power Assignments}
    \includegraphics[width=0.5\textwidth]{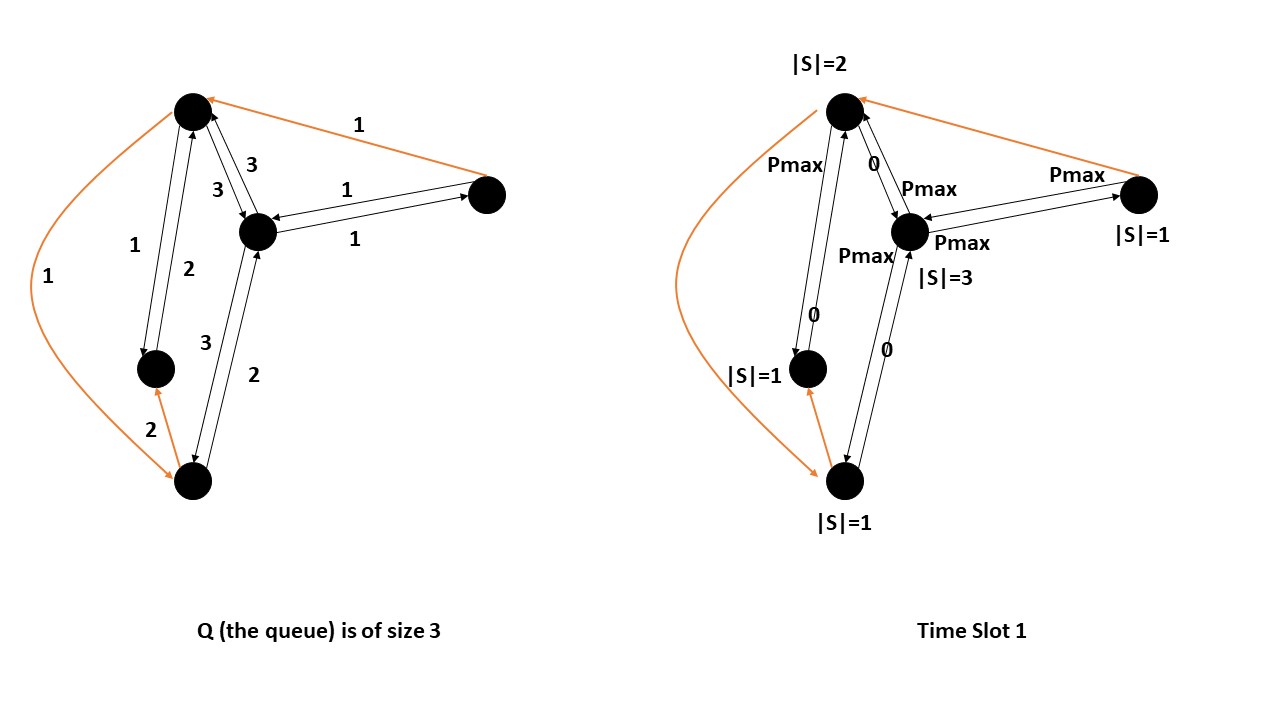}
    \label{fig:PowersQueue}
\end{figure} 
\begin{figure}[!htb]
    \centering
    \includegraphics[width=0.5\textwidth]{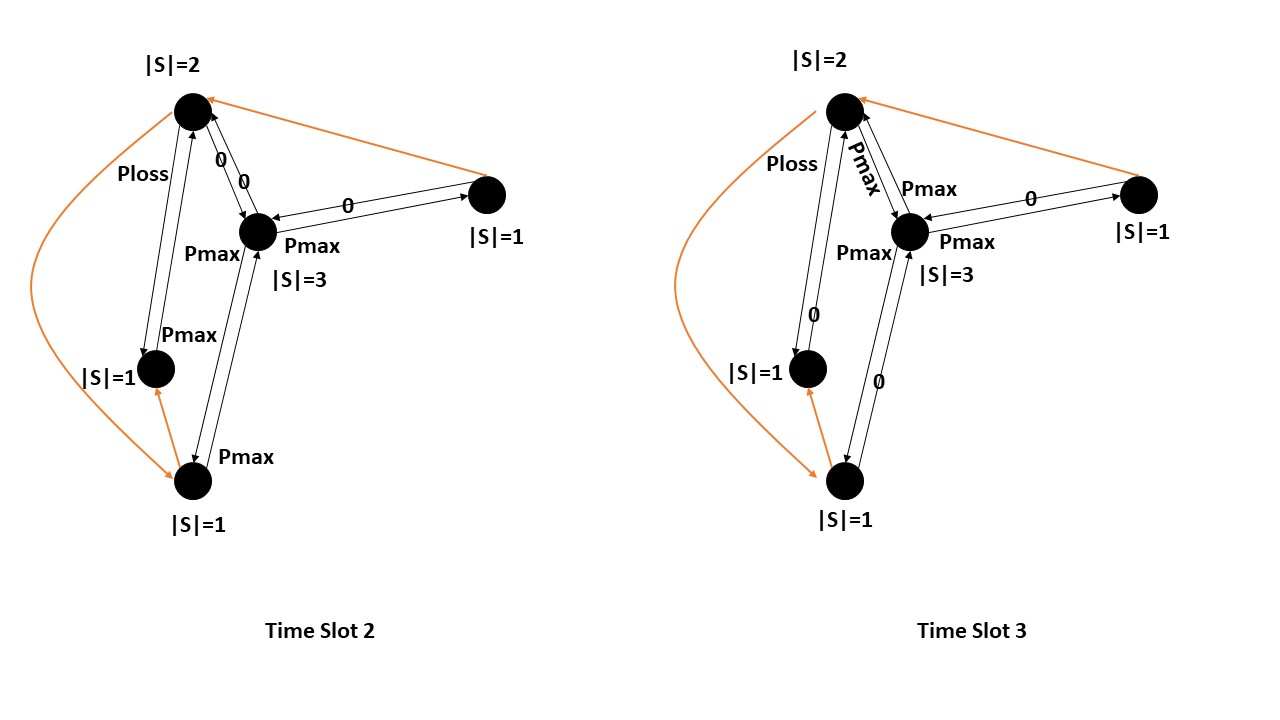}
\end{figure}

\section{Allowed Interference Interpolation}

The solution proposed in section \ref{sec:DynamicPower} assumes some allowed interference $P_{min}$. This parameter was considered as the interference threshold which is undetectable in the signal due to technology. But we can allow more interference which, on the one hand, causes more interruption in priority edges, but on the other hand, allows more power on blocked edges.  

Let $C(G, Q)$ be the function that measures the capacity of the entire network using the queue $Q$. We define the function $f(x)$ where $x$ is a parameter denoting a ratio of allowed interference. That is, for each link $e$ we allow the interference in time slot $j$ to be $I(E) \leq \frac{P_r(j)}{x}$. Therefore, $f$ is defined as running the algorithm for dynamic power where $P_{min} \leq I(E)$ for each edge. The function then returns $C(G, Q)$ as a result. 

Calculating the optimal value for $x$, denoted $x_{m}$, starts with setting all powers of links in $G$ to their maximum. We then check the smallest value $x_1$ of SIR across all links. We define the lowest possible value of allowed interference as a parameter $x_0 > 0$, which can be as small as we wish. Let $k > 0$ be an input parameter. We split the close segment $[x_0, x_1]$ into $k-1$ sub-segments such that we have $k$ points $\{x_i\}$ at constant intervals. We calculate $f(x)$ on these $k$ points. Denote $\{y_i\}$ the set of results. We use interpolation methods to approximate a function on the set $\{(x_i, y_i)\}$. We find a maximum point $\hat x_i$. Since the interpolation is an approximation of the actual function $f(x)$ over the segment, we calculate the actual value of $f(\hat x_i)$ and compare them to the values in $\{y_i\}$. Let $y_m$ be the maximum value in the set $\{y_i\} \cup \{f(x)|x \in \{\hat x_i\})\}$. We define $x_m$ to be the $x$ value corresponding to $y_m$. We now use recursion on the above scheme to better approximate the optimal value for the allowed interference. Once we choose a value $x_m$, we continue by defining a new close segment around $x_m$ and repeat the above scheme on it. The new sub-segment is defined as $[x_m - \frac{x_1 - x_0}{2(k-1)}, x_m + \frac{x_1 - x_0}{2(k-1)}]$. We can repeat this for as many recursion levels as we wish until the value of $x_m$ does not change between recursions. We then execute the dynamic power algorithm with $x_m$ as a ratio for allowed interference where $P_{min}$ is calculated accordingly for each edge $e$.

\section{Frequency Allocation Planning}

We have, so far, assumed that the given input graph $G$ is a result of the shuttering of a network into links that share the same frequency and therefore require solving interference. The frequencies are assigned to the links during the planning stage. In this section, we further extend the ideas we presented and propose algorithms for frequency assignment. The motivation here is to allocate extra resources (more frequencies) accordingly for optimization of the profit from additional channels.  

We start with a single frequency for all links in the network. Then, we add a single frequency and evaluate the profit from such an addition. If the profit is more than a predefined parameter, we assign the frequency to the links which will produce the said profit, and try to add another frequency. We repeat this until the profit we gain is smaller than a given threshold. Note that the profit is a function that we can define as desired. In this paper, we refer to the profit as a function of the total capacity transmitted in the network per second. We use the solution in section \ref{sec:DynamicPower} to assign the new frequency to edges where the benefit of such a change would be most profitable. We thus create two sub-graphs of the original network and can recursively repeat the process for using more frequencies. \\

To evaluate a benefit, we define the Power-Gain (PG) function of each edge $e = (v,u)$. The function measures the amount of additional power that would translate to capacity in the network given that $e$ changes its frequency to the new one. Let $G_1$ be the graph of the current frequency and $G_2$ be the graph of the new frequency. We define the value $P_{Q}(e)$, which is the average power a black edge $e$ transmits during a queue $Q$. For the case of a red edge $e$ we define $P_{Q}(e) = - \frac{\sum_{j \in Q} Pr_j(e)}{Q}$, which is the actual gain of the interference that will be added to the power translated to capacity in case the red edge is removed from the graph. The gain and loss of power of $e$ can then be measured as follows. Let $Q_1$ be the queue in $G_1$ and $Q_2$ be the queue in $G_2$ after $e$ moves to $G_2$. Then we can give a formal definition of $PG(e)$.

\begin{definition}  \label{def:PowerGainFucntion}
Let $e = (v,u)$ be an edge. Let $\overrightarrow E^r_{G_1}$ denote the red edges contesting $e$ on $u$ and $b(\overrightarrow E^r_{G_1})$ be the base edges of these red edges. We define $\mathcal{U}_{G_1} = \overrightarrow E^r_{G_1} \cup r(e) \cup b(\overrightarrow E^r_{G_1})$ the surrounding edges of $e$. (See Figure \ref{fig:Vicinity} for an example of the vicinity $\mathcal{U}$ of an edge.) The power gain function of $e$ in $G_1$ is defined by 

\begin{equation} \label{eq:pg1}
    PG_{G_1}(e) = - P_{Q}(e) + \sum_{e_i \in \mathcal{U}_{G_1}} (P_{Q_1}(e_i) - P_{Q}(e_i))    
\end{equation}

Here, $Q$ is the queue in $G_1$ prior to removing $e$.
In the same manner, we define $\mathcal{U}_{G_2} = \overrightarrow E^r_{G_2} \cup r_2(e) \cup b(\overrightarrow E^r_{G_2})$ the surrounding edges of $e$ in $G_2$. Here $r_2(e)$ are the effects of $e$ in $G_2$. The power gain function of $e$ in $G_2$ is defined by 

\begin{equation} \label{eq:pg2}
    PG_{G_2}(e) = P_{Q_2}(e) + \sum_{e_i \in \mathcal{U}_{G_2}} (P_{Q_2}(e_i) - P_{Q'}(e_i))    
\end{equation}

Here $Q'$ is the queue in $G_2$ before adding $e$.
Now we can define the PG function over the edge set of $G$,

\begin{equation} \label{eq:totalPG}
    PG(e) = PG_{G_1}(e) + PG_{G_2}(e)
\end{equation}
\end{definition}

\begin{figure}[ht]
\caption{Example of the set $\mathcal{U}$ for the link in green.}
\centering
\includegraphics[width=0.5\textwidth]{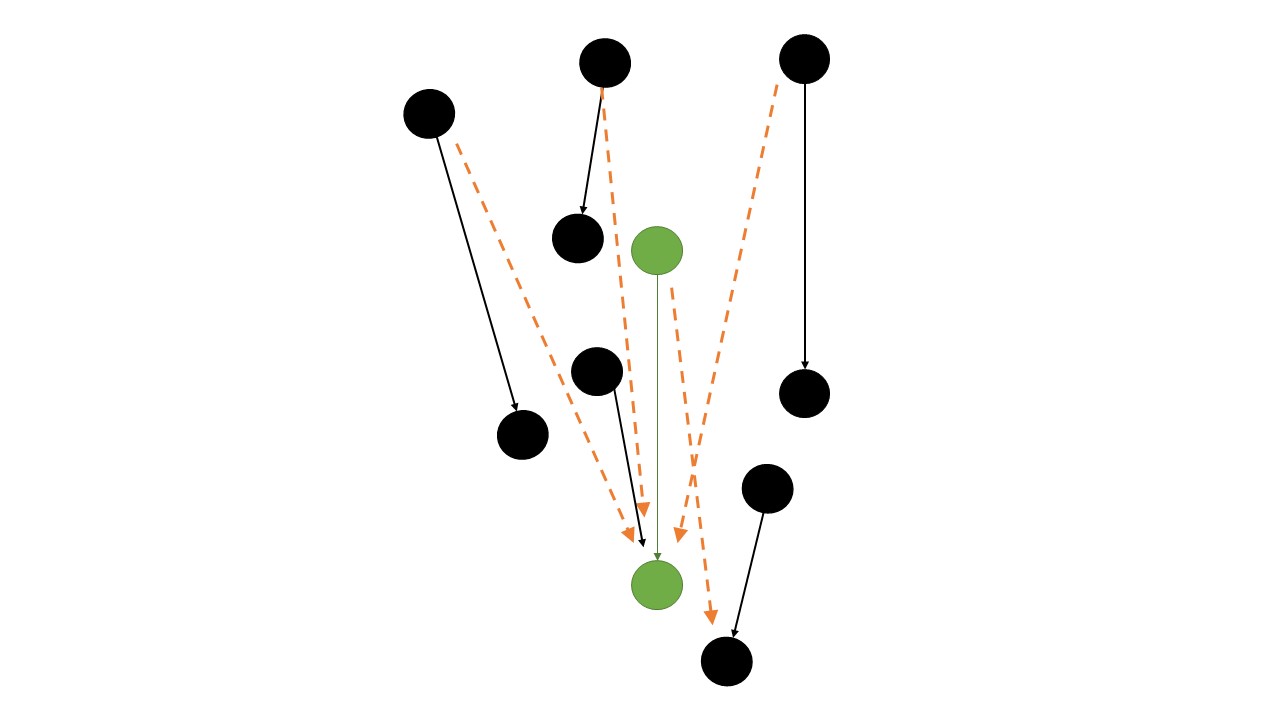}
\label{fig:Vicinity}
\end{figure}

\subsection{Greedy is Optimal}

We define the optimality of a subgraph using the PG values of its edges. The optimality of an assignment to the new frequency is measured by the amount of power transmitted across each edge during a queue minus the amount of interference during the transmission. 
By the definition of the PG function, the $PG$ values of edges in $G_1$ contain within them the loss we will have in $G_2$ as well. Thus, we can define the optimality of power-interference of both graphs as minimizing the average of $PG$ values on edges in $G_1$. That is, as long as there is an edge with $PG > 0$ in $G_1$, we still have something to gain from moving it to $G_2$. Our algorithm continues to move edges to the new frequency as long as there are such edges with $PG > 0$. Therefore, the average of $PG$ values in $G_1$ goes to 0 as the algorithm progresses and our algorithm is optimal with regard to a power interference ratio.

\subsection{Runtime Analysis }

We analyze the running time for computing the PG value of an edge by definition \ref{def:PowerGainFucntion}. To calculate the values of $P_{Q_1}$ and $P_{Q_2}$, we require one edge addition and one edge deletion. This is achieved within at most $5$ operations. For the sum $PG_{G_1}$ we require at most $\delta$ operations. For the sum $PG_{G_2}$ we require at most $\delta+2$ operations. One more operation is required for summing the value $PG$. Overall, the calculation of the PG function requires $O(\delta)$ time. \\

We next present an analysis of the overall runtime of our frequency assignment algorithms. We detail the steps and the required running time. We denote $m = |E_b(G)|$.

\begin{enumerate}
    \item We require $O(\delta m)$ time for building the queue.
    
    \item Calculating the power tables on $G$ also requires $O(\delta m)$. 

    \item We calculate the PG values of black edges in $G$. This requires $O(\delta m)$.
    
    \item We sort the edges by PG values in $O(m \log (m))$ time and select the maximum.
    
    \item We update the PG value of at most $\delta$ edges (only black edges in $\mathcal{U}_{G_1}$). This requires $O(\delta^2)$ time.
    
    \item We need to sort the updated edges within the list of all edges. This requires $O(\delta \log (m))$ time.
    
    \item We repeat steps 4 and 5 until we have no more edges with a positive PG value. A thorough analysis can show a small running time, but executing at most $O(m)$ iterations is enough for an efficient analysis.
    
    \item We reduce the labeling to be greedy by iterating all black edges in $G_1 \cup G_2$ and check for the smallest available label for each edge. This requires $O(\delta m)$ time.  
    
    \item We recalculate the power tables for $G_1$ and $G_2$, which also takes $O(\delta m)$ time.
\end{enumerate}

From the above, we can state the following theorem to conclude our algorithm.

\begin{theorem}
There is an $O(\delta m \cdot (\log (m) + \delta))$ deterministic algorithm for assigning a new frequency to a given graph.
\end{theorem}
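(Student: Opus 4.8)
The plan is to verify that the procedure outlined in the runtime analysis is correct and then to sum the costs of its nine steps, checking that the iterative refinement loop dominates and that the total collapses to the claimed bound.

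First I would pin down the algorithm precisely: given $G$, it (i) builds the queue $Q$ with Algorithm~\ref{alg:EdgeColoring}, (ii) fills the power tables using the formulas of Section~\ref{sec:DynamicPower}, (iii) computes $PG(e)$ for every black edge $e$ via Definition~\ref{def:PowerGainFucntion}, and then (iv) repeatedly takes a black edge of maximum positive $PG$, moves it from the current-frequency graph $G_1$ to the new-frequency graph $G_2$, and refreshes the $PG$ values of the edges whose vicinity changed, halting once every black edge has $PG\le 0$; finally it greedily relabels and recomputes the two power tables. Correctness of the output reduces to Lemma~\ref{lem:legalQueue} applied to $G_1$ and to $G_2$ after the greedy relabel in step~8, and the optimality claim is precisely the content of the ``Greedy is Optimal'' subsection; so only the timing remains to be argued.

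Second, I would add up the steps. Steps~1, 2, 3, 8 and 9 each sweep all $m$ black edges doing $O(\delta)$ work per edge (queue construction, one power-table pass, one $PG$ evaluation — which is $O(\delta)$ by the per-edge analysis — the greedy relabel, and the final recomputation), hence $O(\delta m)$ each. Step~4 sorts the $m$ edges by $PG$ value in $O(m\log m)$. The core is steps~5--7: moving one edge changes the vicinity of at most $\delta$ black edges, so re-evaluating their $PG$ values costs $O(\delta^2)$ and re-inserting them into the sorted structure costs $O(\delta\log m)$; with the number of moves bounded by $O(m)$, this contributes $O\big(\delta m(\delta+\log m)\big)$. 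Summing all steps gives $O\big(\delta m(\delta+\log m)\big)+O(m\log m)+O(\delta m)$, and since $\delta\ge 1$ the first term absorbs the others, which is the statement.

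The main obstacle is justifying the two facts that make steps~5--7 cheap: that there are at most $O(m)$ iterations, and that each iteration needs only $O(\delta)$ local updates rather than a full rescan. The iteration bound follows because each move permanently transfers one black edge from $G_1$ to $G_2$ and edges never move back, so $|E_b(G_1)|$ strictly decreases and at most $m$ moves occur. Locality follows because, by Definition~\ref{def:PowerGainFucntion}, $PG(e')$ depends only on $e'$ together with the edges in $\mathcal{U}_{G_1}(e')\cup\mathcal{U}_{G_2}(e')$; hence moving $e$ can disturb $PG(e')$ only when $e'$ lies in $\mathcal{U}(e)$, and $|\mathcal{U}(e)|=O(\delta)$, so only $O(\delta)$ priority-queue updates are required per move. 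Once these two points are secured, the arithmetic above closes the proof; a secondary point worth a sentence is that the priority-queue operations (one extract-max and the $O(\delta)$ key updates) are themselves $O(\log m)$, consistent with the $O(\delta\log m)$ charged per move.
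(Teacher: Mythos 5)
Your proposal follows essentially the same route as the paper: it enumerates the same nine steps with the same per-step costs and observes that the $O(m)$ iterations of the $O(\delta^2+\delta\log m)$ update loop dominate, yielding $O(\delta m(\delta+\log m))$. The only difference is that you explicitly justify the two facts the paper asserts without argument (the $O(m)$ bound on iterations via the one-way transfer of edges from $G_1$ to $G_2$, and the $O(\delta)$ locality of $PG$ updates via Definition~\ref{def:PowerGainFucntion}), which strengthens rather than changes the argument.
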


As technology advances, networks require more links. Therefore, there are more black edges in the input graph and interference is better managed at the hardware level, meaning fewer red edges in $G$. Thus, it is more likely that $\log(m) > \delta$, and so our algorithm behaves as $\tilde O(\delta m \log(m))$. This is only greater by a factor of $\delta$ than the lower bound of any weighted assignment algorithm.

\section{Experimental Results}

We simulated a communication network where all links use the same frequency. We generated random graphs and for each generated graph we compared the network performance with and without our method. We ran the experiments on a Python simulator using the library Numpy. We performed three experiments, each on 10 randomly generated graphs, and we present here the average on each measurement. Each experiment denotes $V$ as the number of vertices, $E$ as the number of links, and $D$ as the maximum allowed degree for each vertex (which affects the density of the network). Table \ref{tab:ExpResults} shows the percentage of change achieved after using our algorithm.

\begin{table}[h!]
\centering
\begin{tabular}{||p{0.3cm}|p{0.3cm}|p{0.3cm}|p{1cm}|p{1cm}|p{1cm}|p{1cm}||} 
 \hline
 V & E & D & Total Capacity & Best Improvement & Power Used & Capacity Loss Due to Interference \\  
 \hline
 20 & 30 & 10 & 234\% & 179\% & 33\% & 7\%  \\
 \hline
 30 & 100 & 30 & 309\% & 179\% & 13\% & 3\%  \\
 \hline
 100 & 300 & 80 & 326\% & 240\% & 5\% & 1\%  \\
 \hline
\end{tabular}
 \vspace{3pt}
\caption{Experiment Results for Dynamic Power}
\label{tab:ExpResults}
\end{table}

Our algorithm not only achieves our main goal of increasing capacity on a single channel but also prevents the waste of energy. We tested the algorithm on two planned networks, one with 8 frequencies used and the second with 4 frequencies used. In each case, we showed that the capacity achieved in these networks using the planned number of frequencies can be achieved with fewer frequencies. Moreover, we also showed that our scheme significantly increases the overall capacity of the network when the originally planned number of frequencies is used. These results are presented in Figures \ref{img:napal} and \ref{img:india}.

\begin{figure}[!htb]    
    \centering
    \includegraphics[width=0.5\textwidth]{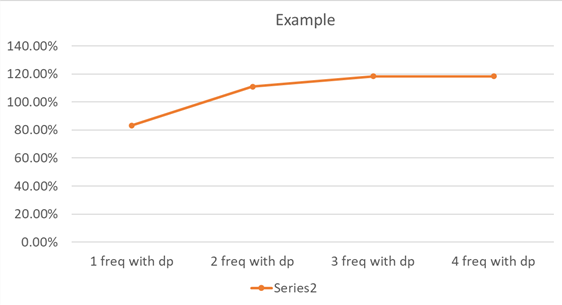}
    \caption{A network of 24 links and 4 frequencies used}
    \label{img:napal}
\end{figure}

\begin{figure}[!htb]  
    \centering
    \includegraphics[width=0.5\textwidth]{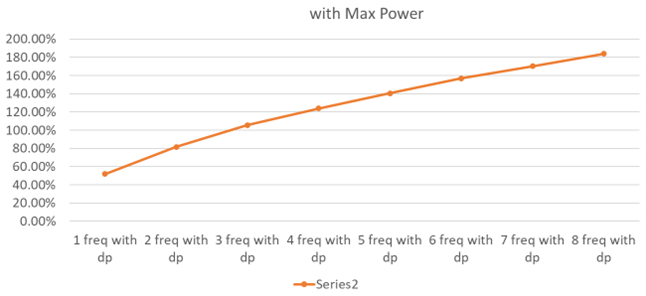}    
    \caption{A network of 366 links and 8 frequencies used}
    \label{img:india}
\end{figure}

\section{Conclusion}

The problem of interference in wireless communication networks is a long-standing problem. Solutions in the past mainly focused on probabilistic machine learning techniques. We devised a novel technique for managing interference. Using this technique we also devised a frequency assignment algorithm which optimized the benefit of adding more frequencies to the network. In experiments, we see that our technique lowers the number of frequencies required for a demand of a network as well as drastically decreasing the wasted energy. The approach of time allocation for transmitted power opens a new direction for solving this problem. As hardware evolved with time, time allocation can be used for frequency changes in each time slot, angle changes in each time slot as well as the existing power change in each time slot. This will give us three dimensions to lower the required resources even further.

\bibliographystyle{plain}
\bibliography{refs}

\end{document}